\renewcommand*{\stackrel}{%
\mathrel\bgroup\stack@relbin
}
\newtheorem{theorem}{Theorem}[section]
\newtheorem{lemma}[theorem]{Lemma}
\newtheorem{proposition}[theorem]{Proposition}
\begin{document}

\title{{\fontsize{16}{16}\selectfont A Waterfilling Algorithm for Multiple Access Point Connectivity with Constrained Backhaul Network}}

\author{Syed~Amaar~Ahmad
\thanks{
Syed A. Ahmad received PhD in Electrical Engineering from Virginia Tech. He is currently working as an R{\&}D Engineer with Savari Networks. Email: saahmad@vt.edu.
}}

\markboth{IEEE Wireless Communication Letters,~Vol.~, No.~, ~}%
{Shell \MakeLowercase{\textit{et al.}}: Bare Demo of IEEEtran.cls for Journals}

\maketitle

\begin{abstract}
The 4G/5G paradigm offers User Equipment (UE) simultaneous connectivity to a plurality of wireless Access Points (APs). We consider a UE communicating with its destination through multiple uplinks operating on orthogonal wireless channels with unequal bandwidth. The APs are connected via a tree-like backhaul network with destination at the root and non-ideal link capacities. We develop a power allocation scheme that achieves a near-optimal rate without explicit knowledge of the backhaul network topology at transmitter side. The proposed algorithm \emph{waterfills} a dynamic subset of uplinks using a low-overhead backhaul load feedback.
\end{abstract}

\begin{IEEEkeywords}
Optimization, Resource Allocation, HetNets, Feedback control, Small Cells
\end{IEEEkeywords}

\IEEEpeerreviewmaketitle

\section{Introduction}
\IEEEPARstart{T}{he} paradigm of device-centric architecture promises to let wireless devices transmit information flows via several heterogeneous nodes \cite{5G}. Dense smalls cells established with a plurality of wireless Access Points (APs) and simultaneous multi-tier communication (LTE and WiFi) will enable devices to have connectivity to multiple access points. The  backhaul of APs is increasingly being viewed as a performance-limiting factor \cite{Chia_backhaul_2013}.

We consider here a single UE streaming data to its destination via its wireless uplinks to APs operating on orthogonal channels and cascaded to a complex backhaul network. {Waterfilling} is a well-known allocation that achieves capacity for a multi-band system \cite{PerezIEEESigProc2005}. When the backhaul capacities are constrained, such as when numerous internal data flows are sharing a backhaul, then such an allocation is sub-optimal. Unlike \cite{CuiCISS2014} we propose a waterfilling algorithm for a tree-structured backhaul network with destination at the root (see Fig.~\ref{fig:furt1}) and where the transmitter does not require explicit knowledge of the backhaul link capacities. Unlike \cite{WeiYuBackhaul2014} that considers one-hop backhaul links, a general tree topology may be deployed for better scalability and latency performance \cite{Tree2010}. Under the algorithm, the UE waterfills a dynamic subset of its uplinks with sufficient backhaul capacity and reduces transmit power for those which are constrained using a novel low-overhead feedback on the effective backhaul load.

\begin{figure}[!t]
\begin{center}
        \centering\includegraphics[width=0.46\textwidth]{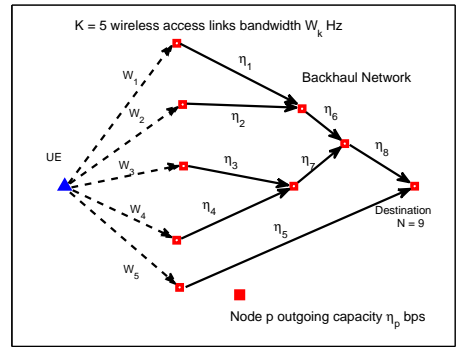}
\caption{A depiction of a UE connected to multiple APs sharing a backhaul network with a tree hierarchy.}
\label{fig:furt1}
\end{center}
\end{figure}
\section{System Model}
We consider a UE with wireless uplinks to $K$ APs. The APs are part of a backhaul network that is denoted by a set ${\mathcal{N}}=\{1,2,\cdots, K, \cdots, N\}$, where $N$ represents the destination node and node $p:K<p<N$ is an intermediate backhaul node. The $K$ uplinks operate over a set of orthogonal narrowband channels with corresponding bandwidths given as ${\mathcal{W}}=[W_1, W_2, \cdots, W_K]$ Hz with the total bandwidth as $W = \sum^K_{k=1} W_k$. 
The backhaul network is arranged as a tree structure with a root node (destination), intermediate nodes and end nodes (APs). Each child node has only one parent such that all branches originating from the APs converge to destination node $N$ (see Fig. \ref{fig:furt1}). The tree topology is denoted by defining the parent of each backhaul node as a set ${\mathcal{H}}$, where node $p$ forwards data to its parent node $j$ such that ${\mathcal{H}}(p) = j$ where $K < j\leq N$. 

The UE's available transmit power is given as ${{P}}_{\max}$ Watts. It transmits to AP $k$ with power $P_k$ where $1\leq k \leq K$ where the transmit power vector is denoted by ${\bf{P}}=[P_1,P_2\cdots, P_K]$. Given $g_{k}$ as the channel power gain and $n_k$ as corresponding noise power, the effective noise level $E_k$, signal-to-noise ratio $\gamma_k$ and rate $\eta_k$ on uplink to AP $k$ are:
\begin{equation}\label{sinr}
\begin{split}
& E_k = \frac{n_k}{g_k}, \\
& \gamma_k = \frac{P_{k}}{E_{k}},\\
\end{split}
\end{equation}
\begin{equation*}
\eta_k = W_k\log_2(1+\gamma_k).
\end{equation*}
The capacity between backhaul node $p: p>K$ to its parent (i.e. ${\mathcal{H}}(p)$) is denoted as $\eta_p$. The achievable rate at any node $p$ in the backhaul network can be recursively defined as follows:
\begin{equation}\label{Recursive_Rate}
\begin{split}
R_p=\left\{
                \begin{array}{ll}
                \sum_{\forall j: {\mathcal{H}}(j)=p}R_j, &\mbox{   $p = N$}\\
                  \min\left(\eta_p, \sum_{\forall j: {\mathcal{H}}(j)=p}R_j\right), &\mbox{ $K<p<N$}\\
                  \min\left(\eta_p, W_p\log_2(1+\gamma_p)\right), &\mbox{ $p\leq K$}                  
                \end{array}
              \right.
\end{split}
\end{equation}
where $R_j$ is the achievable rate at node $j$ that is being forwarded to parent $p$. Eq. \eqref{Recursive_Rate} is based on the Max-Cut Min-Flow Theorem \cite{sherali} where at node $p$, the achievable rate $R_p$ is simply the minimum of the aggregate incoming rate from child nodes and the outgoing rate.

\section{Problem Formulation}
The resource allocation maximization of the achieved rate $R_N$ in \eqref{Recursive_Rate} at destination node $N$ is: 
\begin{equation}
\begin{split}\label{opt_func_het}
R^* = \underbrace{\max}_{{\bf{P}}} \hspace{0.1in} & R_N\left({\bf{P}}\right) \\
\sum^K_{k=1} P_{k} & \leq {P_{\max}}\\
P_{k} &\geq 0.
\end{split}
\end{equation}
\begin{proposition}
If the backhaul capacity of all nodes is high enough, $P_{k} = (W_k\mu - E_k)^+$ is an optimal allocation for  \eqref{opt_func_het} with $\mu$ as the waterfilling level for all uplinks. \label{WF_het} 
\end{proposition}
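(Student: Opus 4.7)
The plan is to first collapse the tree recursion \eqref{Recursive_Rate} into a simple sum of uplink rates, and then to apply standard Lagrangian/KKT reasoning to the resulting concave program. Once the backhaul drops out, the statement reduces to the classical multi-band water-filling lemma.

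First I would make the phrase ``high enough'' precise as: $\eta_p \geq \sum_{j:\mathcal{H}(j)=p} R_j$ at every intermediate node $K<p<N$, and $\eta_p \geq W_p\log_2(1+\gamma_p)$ at every AP $p\leq K$, with the powers set to the candidate allocation. Under this hypothesis each $\min(\cdot,\cdot)$ in \eqref{Recursive_Rate} selects its second argument, and unrolling the recursion from the root yields
\begin{equation*}
R_N(\mathbf{P}) \;=\; \sum_{k=1}^K W_k\log_2\!\left(1+\frac{P_k}{E_k}\right).
\end{equation*}
Thus \eqref{opt_func_het} becomes the standard parallel-Gaussian-channel sum-rate problem under a total-power budget.

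Next I would observe that the objective is a sum of strictly concave functions of $P_k$ and the feasible set $\{\mathbf{P}:P_k\geq 0,\sum_k P_k\leq P_{\max}\}$ is compact and convex, so a unique maximizer exists and is characterized by the KKT conditions. Introducing a multiplier $\lambda\geq 0$ for the power budget and $\nu_k\geq 0$ for $P_k\geq 0$, stationarity gives $W_k/\bigl((P_k+E_k)\ln 2\bigr) + \nu_k = \lambda$. For indices with $P_k>0$, complementary slackness forces $\nu_k=0$ and hence $P_k = W_k/(\lambda\ln 2) - E_k$; for indices with $P_k=0$ it forces $W_k/(E_k\ln 2)\leq \lambda$. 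Setting $\mu := 1/(\lambda\ln 2)$, the two cases combine into $P_k=(W_k\mu - E_k)^+$. The scalar $\mu$ is pinned down by $\sum_k(W_k\mu - E_k)^+ = P_{\max}$, which has a unique solution because the left-hand side is continuous, piecewise linear, and strictly increasing on the relevant range.

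The main (mild) obstacle is simply formalizing what ``high enough'' means so that the backhaul constraints can be discarded cleanly; after that, the derivation is textbook convex optimization and I would expect the write-up to be short. A secondary point worth stating explicitly is monotonicity of $\sum_k(W_k\mu-E_k)^+$ in $\mu$, which guarantees that the water level $\mu$ implied by the budget is unique and hence that the proposed allocation is the unique optimum of \eqref{opt_func_het}.
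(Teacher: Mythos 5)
Your proof is correct, and it is considerably more self-contained than the paper's. The paper disposes of this proposition in two sentences: it cites the known water-filling result of \cite{PerezIEEESigProc2005} and handles the unequal bandwidths by a change of variables, noting that $E_k/W_k$ is the noise power spectral density so that the per-channel allocation $\left(\left(\mu-\frac{E_k}{W_k}\right)W_k\right)^+$ rescales to $(W_k\mu-E_k)^+$; it never formalizes ``high enough'' nor derives the KKT conditions. You instead (i) state the hypothesis under which the $\min$'s in \eqref{Recursive_Rate} collapse, and (ii) rederive water-filling from stationarity and complementary slackness for the bandwidth-weighted objective. Both steps are sound and your stationarity condition $W_k/\bigl((P_k+E_k)\ln 2\bigr)+\nu_k=\lambda$ with $\mu=1/(\lambda\ln 2)$ reproduces exactly the paper's formula. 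What your derivation buys is a verifiable statement of the hypothesis and a proof that does not lean on an external reference; what the paper's route buys is brevity and the explicit observation that the unequal-bandwidth case is just the classical case after normalizing noise by bandwidth.

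One small point of rigor worth tightening: you assume the backhaul inequalities hold \emph{at the candidate allocation}, but then write $R_N(\mathbf{P})=\sum_k W_k\log_2(1+P_k/E_k)$ for generic $\mathbf{P}$. The clean way to close this is a sandwich argument: since each $\min$ in \eqref{Recursive_Rate} is at most its second argument, $R_N(\mathbf{P})\leq\sum_k W_k\log_2(1+P_k/E_k)$ for \emph{every} feasible $\mathbf{P}$; your KKT analysis shows the candidate maximizes this upper bound over the feasible set; and the ``high enough'' hypothesis guarantees the candidate attains the bound with equality. Hence the candidate is optimal for \eqref{opt_func_het}. This is a one-line repair, not a flaw in the approach.
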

\begin{proof}
We modify the results \cite[pp~5]{PerezIEEESigProc2005} for unequal bandwidth. For a waterfilling level $\mu$, channel $k$ of bandwidth $W_k$ must be allocated $P_{k}= \left(\left(\mu - \frac{E_k}{W_k}\right)W_k\right)^+=(W_k\mu - E_k)^+$ where $\frac{E_k}{W_k}$ is the channel's noise power spectral density. Algorithm~\ref{algo:WFLevel} illustrates the determination of $\mu$.
\end{proof}
To implement \eqref{opt_func_het} for general cases the UE would require complete knowledge of the backhaul network parameters as a single water-filling level would not work for uplinks with over-loaded backhaul nodes. We assume that channel state information (CSI) for achievable rate on each of the $K$ uplinks, as denoted by \eqref{sinr}, is available to the UE whereas explicit knowledge of the backhaul network is not. We propose a simple description to characterize the end-to-end backhaul capacity from any AP to the destination node. We define the \emph{rate differential} as the difference at any point $p$ in the backhaul hierarchy between its outgoing capacity and the aggregate incoming data rate:
\begin{equation}\label{Vr_het}
\begin{split}
V_p=\left\{
                \begin{array}{ll}
                  \eta_p - W_p\log_2(1+\gamma_p),  &\mbox{ $p\leq K$}\\
                  \eta_p - \sum_{\forall j: {\mathcal{H}}(j)=p} R_j, &\mbox{ $K+1\leq p$.}
                \end{array}
              \right.
\end{split}
\end{equation}
Whenever $V_p < 0$, the outgoing link from node $p$ is a \emph{bottleneck} link. A rate mismatch between the uplink and a backhaul link creates a congestion effect at the intermediate node \cite{Neely2005} and where the UE can simultaneously reduce its transmit power and rate. Moreover, this may instead be re-allocate power to other uplinks. Conversely, $V_p\geq 0$ indicates that the backhaul capacity via $p$ is high enough to support the current load. Thus, $V^+_p=\min\left(V_p,0\right)$ denotes the achievable rate improvement.  

We assume that a positive-valued constant $\tau$ is a rate differential threshold that represents some tolerable load at backhaul node $p$. We can denote the backhaul capacity state at node $p$ in Table~\ref{tab:TableOfNotationForMyResearch}.
\begin{table}[t]\caption{Node $p$'s Backhaul State.}
\begin{center}
\begin{tabular}{r c c p{0.1cm} }
\toprule
State ${\mathcal{S}}_p$& Rate Differentials& Description\\
\bottomrule
\small
$1$ & $ V_p\geq 0 $ & \mbox{Under-utilized}\\ 
$2$ &$-\tau\leq V_p<0 $ & \mbox{Balanced}\\
$3$ & $V_p<-\tau $& \mbox{Over-loaded}\\
\bottomrule
\end{tabular}
\end{center}
\label{tab:TableOfNotationForMyResearch}
\end{table}
Note that each of the 3 possible backhaul states on the $K$ APs (i.e. (i) $V_p(t)\leq 0$, (ii) $V_p(t)<-\tau$ and (iii) $-\tau\leq V_p(t)<0$) can be represented as $\lceil{\log_2(3)}\rceil= 2$ bits. The effective backhaul state ${\mathcal{S}}^{(k)}_{eff}$ of the end-to-end path from AP $k: k\leq K$ to the destination is recursively computed:
\begin{equation}
\begin{split}\label{RecurState}
{\mathcal{S}}^{(k)}_{eff} &= \max(\max(\max\cdots\max\left({\mathcal{S}}_p, {\mathcal{S}}_j) \cdots{\mathcal{S}}_i),{\mathcal{S}}_N\right)\\
&=\max\left({\mathcal{S}}_p, {\mathcal{S}}_j,\cdots, {\mathcal{S}}_i,{\mathcal{S}}_N\right).
\end{split}
\end{equation}
Note that $p= {\mathcal{H}}(k)$, $j = {\mathcal{H}}(p)$, $N = {\mathcal{H}}(i)$ and so on are nodes along the path from AP $k$ to destination $N$. Basically, in \eqref{RecurState}, from every AP the effective state to a backhaul node along the route is propagated as $2$ bits until this information reaches the destination. The destination can then compress load information from all $K$ APs into $\lceil{\log_2(3^K)}\rceil= \lceil{K\log_2(3)}\rceil$ feedback bits for the UE. In practice the backhaul states may be computed for an interval spanning several LTE resource blocks comprising several hundred bits. This negligible feedback can then be embedded on a downlink wireless control channel common to all APs \cite{3gppDCP12}. In contrast, to directly implement the optimization in \eqref{opt_func_het}, the information overhead required by the UE will scale as ${\mathcal{O}}\left((N)\lceil\log_2(N)\rceil\right)$ where $\lceil\log_2(N)\rceil$ bits are needed to uniquely identify the parent of each of the $N$ nodes (where $N>K$) with additional bits for representing its backhaul capacity.

\section{Waterfilling with Load Feedback}
We assume that adaptations occur in time intervals denoted as $t \in \{0,1,2,..\}$. The proposed transmit power algorithm basically works as follows: Initially, at $t=0$, the UE uses classic waterfilling allocation on all uplinks using maximum transmit power. Once it has acquired the corresponding backhaul load information ($t>0$), it maintains its transmit power for some AP $k$ that is load-balanced (i.e. ${\mathcal{S}}^{(k)}_{eff}(t)=2$) and reduces it by a scaling factor $Z: 0<Z<1$ for those that are over-loaded (i.e. ${\mathcal{S}}^{(k)}_{eff}(t)=3$). For the remaining (\emph{residual}) transmit power, a new waterfilling level is then determined and power allocation is made for the remaining wireless links with sufficient effective backhaul capacity. The algorithm thus seeks to incrementally raise the waterfilling level of those uplinks whose effective backhaul capacities are large enough. 
\begin{algorithm}[t]
{  
  \If{$t=0$}{
   $P_{r}(t+1)=  P_{\max}$\;
   }
   \Else{
   	$P_{r}(t+1)=  P_{\max}-\sum^K_{k=1: {\mathcal{S}}^{(k)}_{eff}(t)=2} P_k(t)-\sum^K_{k=1: {\mathcal{S}}^{(k)}_{eff}(t)=3} Z.P_k(t)$\;
   }
$t=t+1$\;   
Reset $t=0$ if $\exists k: E_k(t) \neq E_k(t-1)$\;  
}
\caption{Residual transmit power.}\label{algo:residualPower}
\end{algorithm}
\begin{algorithm}[t]
{Initialization: {$\mu(t)=0$, $\overset{\sim}{\mu}=0, \overset{\sim}{K} = K$}\\
Sort pairs $(\overset{\sim}{E_j}, \overset{\sim}{W_j}): \overset{\sim}{E_j}/\overset{\sim}{W_j} \leq \overset{\sim}{E_{j+1}}/\overset{\sim}{W_j}, \forall j$\;
 
  \While{$\mu(t)=0$}{
   $\overset{\sim}{\mu} = \frac{P_r(t+1) + \sum_{j=1}^{\overset{\sim}{K}} \overset{\sim}{E_j}}{ \sum^{\overset{\sim}{K}}_{j=1} \overset{\sim}{W_j}}$\;
   
   \If{$\overset{\sim}{\mu} \geq \max(\frac{\overset{\sim}{E_{1}}}{\overset{\sim}{W_{1}}}, \cdots, \frac{\overset{\sim}{E_{\overset{\sim}{K}}}}{\overset{\sim}{W_{\overset{\sim}{K}}}})$}{
   $\mu(t+1) = \overset{\sim}{\mu}$\;
   }
   \Else{
   $\overset{\sim}{K} = K-1$\;
   }
 } 
}\caption{Updated waterfilling level $\mu(t+1)$.}\label{algo:WFLevel}
\end{algorithm}
Algorithm~\ref{algo:residualPower} describes the computation of {residual} transmit power where the last condition denotes a reset due to time-varying channel gain changes induced by fading. Algorithm~\ref{algo:WFLevel} then describes how a  waterfilling level is then determined for the given residual power. It works by sorting uplinks by effective noise levels and eliminating those channels for any allocation which suffer from high noise levels. Note that equation on line $4$ in Algorithm~\ref{algo:WFLevel} is derived from the fact that the residual transmit power equals the allocation to the $\footnotesize\overset{\sim}{K}$ channels where $\footnotesize P_r(t+1) = \sum_{j=1}^{\overset{\sim}{K}}(\mu - \overset{\sim}{E_j}/\overset{\sim}{W_j})\overset{\sim}{W_j}$. Finally, the transmit power is then updated as follows:
\begin{subequations}\label{power_alloc_het}
\small
\begin{empheq}[left={{P_k(t+1)=}}\empheqlbrace]{align}
&(W_k\mu(t+1)- E_k(t))^+, \mbox{ ${\mathcal{S}}^{(k)}_{eff}(t)=1$}\label{pa_het}\\
&P_k(t), \hspace{0.95in}\mbox{ ${\mathcal{S}}^{(k)}_{eff}(t)=2$}\label{pb_het} \\
&Z.P_{k}(t), \hspace{0.82in}\mbox{ ${\mathcal{S}}^{(k)}_{eff}(t)=3$}. \label{pc_het}
\end{empheq}
\end{subequations}
For the uplinks in state ${\mathcal{S}}^{(k)}_{eff}(t)=1$, the UE allocates it residual transmit power using the waterfilling allocation as per Proposition~\ref{WF_het}. Otherwise, if it is in ${\mathcal{S}}^{(k)}_{eff}(t)=2$, the effective backhaul capacity for the uplink is balanced and hence the transmit power is maintained in \eqref{pb_het}. Finally in ${\mathcal{S}}^{(k)}_{eff}(t)=3$, since the backhaul link is over-loaded the UE reduces its transmit power as per \eqref{pc_het} by $Z$. 
If $Z$ is set to a low value there may be too fast a reduction in data rates between successive iterations causing an over-loaded backhaul link to change to an under-utilized link and vice versa thereby causing an unstable achieved rate.
\begin{lemma}
The decrease in the achievable rate between successive iterations is upper bounded such that $R_N(t)-R_N(t+1)\leq - \max(W_1, W_2, \cdots, W_K)\log_2(Z)$. \label{MaxRateCH}
\end{lemma}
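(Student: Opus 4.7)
The plan is to isolate which uplinks can actually drive a decrease in $R_N$ between iterations $t$ and $t+1$, and then bound the worst single-link drop. Uplinks in state 2 keep their power and contribute no change, while uplinks in state 1 receive the residual budget computed by Algorithm~\ref{algo:residualPower}. Since the state 3 uplinks only shrink their allocations, the residual pool $P_r(t+1)$ can only grow relative to what those uplinks previously held, so by Proposition~\ref{WF_het} the waterfilling allocation across the state 1 uplinks is non-decreasing, and consequently so are their wireless rates. The only source of a rate drop at $R_N$ is therefore the set of over-loaded uplinks governed by \eqref{pc_het}.

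For a single state 3 uplink $k$, the power is scaled from $P_k(t)$ to $Z\,P_k(t)$, so the wireless rate drop is
\[
W_k\log_2(1+\gamma_k(t)) - W_k\log_2(1+Z\gamma_k(t))
\;=\; W_k\log_2\!\left(\frac{1+\gamma_k(t)}{1+Z\gamma_k(t)}\right)
\;\leq\; -W_k\log_2(Z),
\]
with the upper bound attained in the limit $\gamma_k(t)\to\infty$. Thus no individual over-loaded uplink can lose more than $-W_k\log_2(Z)$ of wireless rate in one iteration.

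The final step lifts this per-link estimate to the root via the recursion \eqref{Recursive_Rate}. At every over-loaded interior node $p$, the $\min$ operator pins the outgoing rate at $\eta_p$, so any reduction that leaves the aggregate incoming rate above $\eta_p$ leaves $R_p$ (and therefore everything above it) unchanged. A propagated drop at $R_N$ can only occur when a single uplink's wireless rate actually crosses below its bottleneck; substituting $\max_k W_k$ into the single-link bound then yields $R_N(t)-R_N(t+1)\leq -\max(W_1,\ldots,W_K)\log_2(Z)$. The hardest part is precisely this aggregation step, because a branch-by-branch accounting would naively sum decreases across disjoint subtrees; the argument must lean on the $\min$-pinning at each over-loaded node to conclude that the worst-case propagated loss is dominated by a single uplink of the largest bandwidth rather than by their sum.
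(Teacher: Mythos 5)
Your per-link computation is exactly the paper's proof: the paper's entire argument is the chain $R_k(t)-R_k(t+1)\leq W_k\log_2\bigl((1+\gamma_k(t))/(1+Z\gamma_k(t))\bigr)\leq -W_k\log_2(Z)\leq -\max(W_1,\ldots,W_K)\log_2(Z)$ for a single state-3 uplink, and your version of this step (including the observation that the bound is approached as $\gamma_k(t)\to\infty$) is correct. Your preliminary bookkeeping --- state-2 uplinks unchanged, state-1 uplinks non-decreasing because the residual pool $P_r(t+1)$ can only absorb power released by the scaled-down state-3 links --- is also sound and goes beyond what the paper writes down.

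The problem is the aggregation step, which you correctly identify as the hardest part but do not actually close. Your claim that ``a propagated drop at $R_N$ can only occur when a single uplink's wireless rate actually crosses below its bottleneck'' does not follow from the $\min$-pinning in \eqref{Recursive_Rate}. Consider two over-loaded uplinks $k_1$ and $k_2$ lying in disjoint subtrees of the root, each of whose incoming rate sits only slightly above its bottleneck capacity at time $t$. In one iteration both can cross below their respective $\eta_p$'s, and since $R_N=\sum_j R_j$ over the root's children, the two decreases \emph{add}, giving a total drop that can approach $-(W_{k_1}+W_{k_2})\log_2(Z)$, which exceeds the claimed bound $-\max_k W_k\log_2(Z)$. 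The $\min$-pinning argument only shows that \emph{no} decrease propagates from a node whose incoming rate stays above $\eta_p$; it does not preclude simultaneous crossings on disjoint branches. To be fair, the paper's own proof never addresses aggregation at all --- it silently equates the network-level decrease with the single-link decrease --- so your attempt at least surfaces the issue. But as written, the lifting from the per-link bound to $R_N$ is asserted rather than proven, and a complete proof would need either an additional hypothesis (e.g., that at most one branch's bottleneck is crossed per iteration, or the $\tau$-margin condition that only appears later in Proposition~\ref{Equilib}) or a restatement of the lemma as a per-path rather than a root-level bound.
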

\begin{proof}
Suppose that at AP $k$ current rate is $R_k(t) = W_k\log_2(1+\gamma_k(t))$. If ${\mathcal{S}}^{(k)}_{eff}(t)=3$ then the UE updates this to $R_k(t+1) = W_k\log_2(1+Z.\gamma_k(t))$. Thus $R_k(t)- R_k(t+1)\leq W_k(\log_2(1+\gamma_k(t))-\log_2(1+Z.\gamma_k(t)))<W_k \log_2((1+\gamma_k(t))/(1+Z.\gamma_k(t)))\leq -W_k\log_2(Z)\leq - \max(W_1, W_2, \cdots, W_k)\log_2(Z)$.
\end{proof}
\begin{proposition}
If $2^{\frac{-\tau}{\max(W_1,W_2,\cdots, W_K)}}< Z<1 $, then the achieved rate converges to an equilibrium where it is within a target $\tau$ bps of the optimum rate $R^*$ in \eqref{opt_func_het}. 
\label{Equilib}
\end{proposition}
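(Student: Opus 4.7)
The plan is to split the argument into three pieces: a no-overshoot observation that links the condition on $Z$ to the threshold $\tau$, a convergence argument showing the iterates reach a stationary configuration, and a final bound $R^*-R_N(\infty)\leq \tau$. First, I would note that the hypothesis $2^{-\tau/\max(W_1,\ldots,W_K)}<Z$ is equivalent to $-\max(W_1,\ldots,W_K)\log_2(Z)<\tau$, and by Lemma~\ref{MaxRateCH} this quantity upper bounds the drop in any single link's achievable rate when its power is scaled by $Z$. Therefore, at any node $p$ with ${\mathcal{S}}_p(t)=3$ (i.e.\ $V_p(t)<-\tau$), the updated differential satisfies $V_p(t+1)<V_p(t)+\tau<0$, so $p$ cannot jump from state 3 directly into state 1. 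This is precisely the ``unstable achieved rate'' flagged in the text preceding Lemma~\ref{MaxRateCH}, and the hypothesis on $Z$ forbids it.

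Second, I would establish convergence by monotone reasoning. Once overshoot is ruled out, every branch that enters state 3 has its transmit power strictly contracted by the factor $Z$ while remaining positive; state-2 branches are frozen by \eqref{pb_het}; and only the power liberated by the state-3 scalings re-enters Algorithm~\ref{algo:WFLevel}. Because $P_{\max}$ is finite and each contraction shrinks a positive quantity by a bounded factor, the set of state-3 nodes must empty in finitely many rounds, after which Algorithm~\ref{algo:residualPower} freezes $P_r$ and Proposition~\ref{WF_het} yields a unique waterfilling level on the state-1 residual. The iterates therefore stabilise at a configuration in which every $V_p\geq -\tau$.

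The hard part will be the last step: bounding the equilibrium gap to $R^*$. I would invoke the Max-Flow Min-Cut reasoning that underlies \eqref{Recursive_Rate}, namely that the destination rate $R^*$ equals the capacity of a minimum cut in the tree. At equilibrium every node is either under-utilized (where the allocation agrees with the unconstrained waterfilling of Proposition~\ref{WF_het}) or balanced with slack at most $\tau$, so on every root-to-leaf path the single tightest node determines the contribution and its incoming rate is within $\tau$ of its outgoing capacity. Applying $R_p=\min(\eta_p,\sum_j R_j)$ inductively from leaves to root then shows that per-node slack does not compound along a path, which would give $R^*-R_N(\infty)\leq \tau$. The delicate technical point I expect to spend time on is exactly this non-accumulation across cascaded balanced nodes, since a naive telescoping would only yield $K\tau$; the fix is that any deficit inherited from a child node is absorbed by the capping at the next balanced parent.
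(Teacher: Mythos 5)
Your first two steps track the paper's own proof closely: the paper likewise converts the hypothesis on $Z$ into $-\max_k W_k\log_2(Z)<\tau$, invokes Lemma~\ref{MaxRateCH} to rule out a jump from state 3 to state 1, and then asserts that all over-loaded nodes eventually settle into state 2 so that the allocation converges. Your monotone-contraction justification for the emptying of the state-3 set is, if anything, slightly more explicit than the paper's ``eventually.'' (Neither you nor the paper addresses the possibility that a node that has reached state 2, or a state-1 path whose waterfilling level keeps rising, later pushes a shared downstream node back into state 3; since the paper does not treat this either, I do not count it against you.)

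The genuine divergence, and the gap, is in your third step. The paper does not argue via min-cut slack at equilibrium; it argues that the optimal strategy consists of (i) throttling each constrained path down to exactly its bottleneck capacity and (ii) handing the freed power to the unconstrained uplinks, and that the scheme imitates both moves with a per-iteration rate granularity of at most $\tau$ (again via Lemma~\ref{MaxRateCH} with $Z=2^{-\tau/\max_k W_k}$), so the converged rate undershoots $R^*$ by at most one step. Your cut-based argument attacks the wrong quantity. At an equilibrium ``balanced'' node one has $-\tau\leq V_p<0$, i.e.\ the aggregate incoming rate \emph{exceeds} $\eta_p$ by up to $\tau$, so $R_p=\min\left(\eta_p,\sum_j R_j\right)=\eta_p$ exactly: the constrained cuts are fully saturated and contribute no rate deficit whatsoever. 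Consequently the ``non-accumulation of slack across cascaded balanced nodes'' that you flag as the delicate point is a non-issue, while the actual source of suboptimality --- namely that the up-to-$\tau$ of over-provisioned rate on each balanced path is being paid for with transmit power that the optimal allocation would instead move to the state-1 uplinks and convert into additional rate there --- is never addressed. To close the argument you must bound the extra rate that this retained power could have produced on the state-1 waterfilled set, which is precisely the power-reallocation step the paper gestures at; a purely rate-slack, min-cut induction on the tree cannot see the power budget and therefore cannot by itself yield $R^*-R_N(\infty)\leq\tau$.
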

\begin{proof}
Note that transmit power (and link rate) on an uplink can be reduced by some scaling factor $Z$ such that no backhaul node transitions from State 1 to state 3 or vice versa. If $Z>2^{\frac{-\tau}{\max(W_1, \cdots, W_k)}}$ or $-\max(W_1,\cdots, W_k)\log_2(Z)<\tau$ equivalently then as per Lemma~\ref{MaxRateCH} even the largest rate decrease along any end-to-end path will be less than $\tau$ between consecutive iterations. Thus ${\mathcal{S}}^{(k)}_{eff}(t)=3$ would not go to ${\mathcal{S}}^{(k)}_{eff}(t+1)=1$. Thus, no backhaul link suddenly goes from being over-loaded to under-utilized or vice versa. Eventually, all over-loaded backhaul nodes reach ${\mathcal{S}}^{(k)}_{eff}(t)=2$ (become load-balanced) at some $t$. It implies that the allocations and achieved rate eventually converge. Next consider the optimality gap. Assume an initially unconstrained backhaul network where the UE maximizes rate by simply using classic waterfilling based on CSI of the $K$ uplinks. Now suppose that the backhaul capacities are constrained for some uplinks along the path to destination. To solve the optimization \eqref{opt_func_het}, the transmission rate can be reduced exactly just enough to match the constrained capacity along these paths and also correspondingly decrease transmit power on these uplinks. This saved power can then be re-allocated to uplinks with sufficient backhaul capacity to increase their waterfilling level so as to improve the achieved rate. Our scheme works the same way where the transmission power and rate, in \eqref{sinr}, is decreased for those uplinks with constrained backhaul links. Despite limited knowledge, our power decrease still achieves a rate equal to the constrained capacity along these paths. The waterfilling level for remaining uplinks, with backhaul nodes in State 1, is then increased by allocating them this additional residual power as per Algorithm~\ref{algo:WFLevel}. Substituting $Z = 2^{\frac{-\tau}{\max(W_1, \cdots, W_k)}}$ in the inequality in Lemma~\ref{MaxRateCH} implies that the achievable rate increase for these uplinks is upper-bounded such that $R_N(t+1)-R_N(t)\leq\tau$ bps. This in turn implies that our scheme, with limited knowledge, achieves a rate within $\tau$ bps of the optimal rate.
\end{proof}
The above intuitively indicates that if $\tau$ is close to zero (i.e. small gap between the achieved rate and the optimal $R^*$) then this comes at the cost of slow convergence.
\begin{figure*}[!t]
\begin{center}
\subfigure[]{\includegraphics[width=0.49\textwidth]{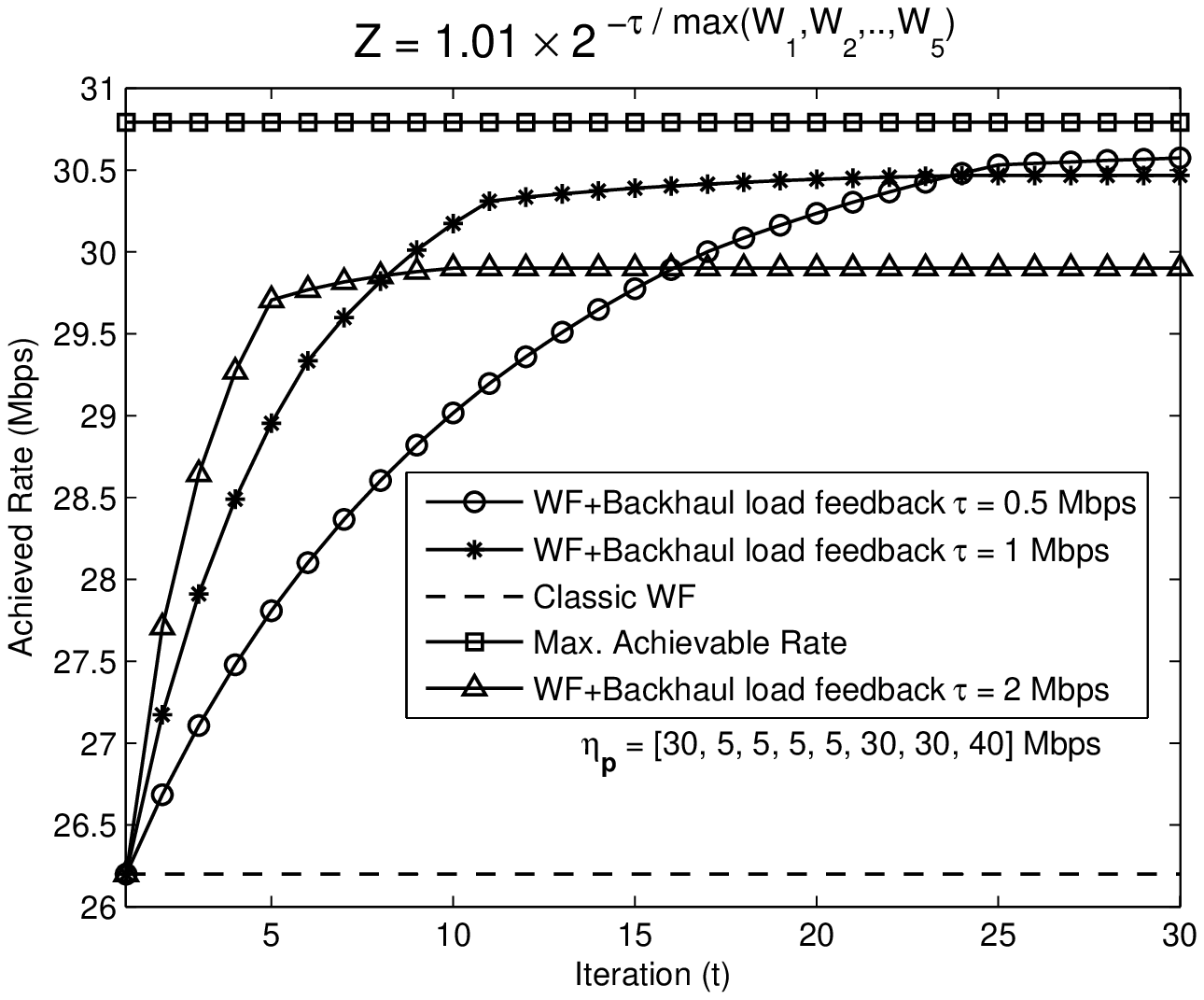}}
\subfigure[]{\includegraphics[width=0.49\textwidth]{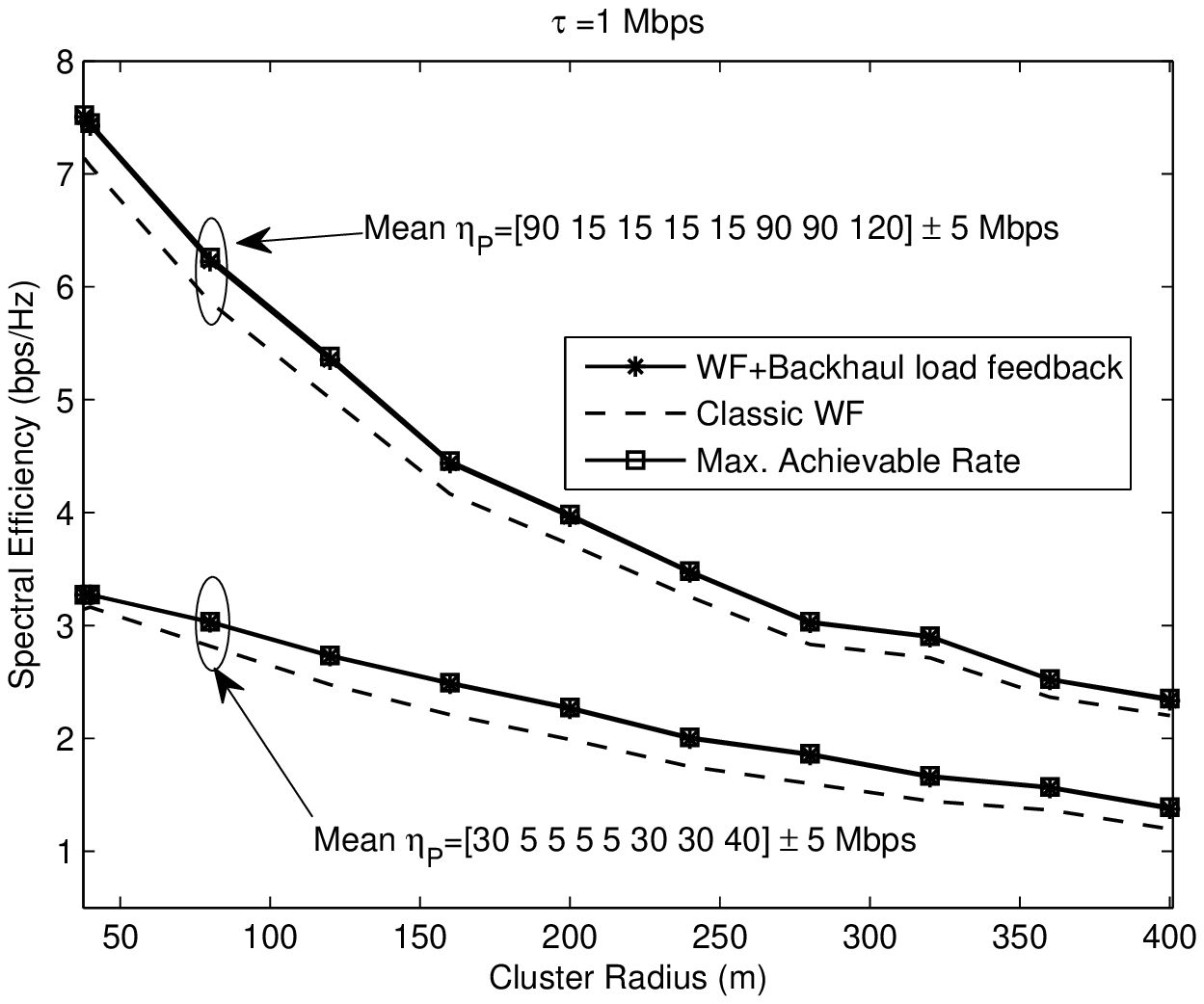}}
\caption{Under the proposed scheme: (a) as per Proposition~\ref{Equilib}, the achieved rate iteratively evolves to within a gap of $\tau$ bps around maximum $R^*$ with a tradeoff with convergence speed, and (b) system achieves a near-optimal performance over the range of cell sizes.}
\label{fig:n}
\end{center}
\end{figure*}

\section{Simulation Results}
In this section, we use simulations to demonstrate the scheme's performance. We assume the noise power spectral density as $-190$ dBW/Hz and $P_{\max}=1.0$ Watts. We assume that the power gains are of the form $g_k = \kappa_k\cdot d_{k}^{-\alpha}$ where $d_{k}$ is distance between UE and AP $k$, path loss exponent $\alpha = 4.0$ and $\kappa_k$ is a lognormally distributed shadowing gain with $5$ dB variance. The shadowing gains are assumed independent and identically distributed for all channels. The channel bandwidth per uplink is set at $1$ MHz, $2$ MHz or $5$ MHz with equal probability. The APs are located randomly and uniformly within a $D$ m cluster radius centered around the UE. We use the backhaul network depicted in Fig. \ref{fig:furt1} with $K=5$. We compare the proposed scheme (WF with backhaul load feedback) with: (i)  optimization of \eqref{opt_func_het} (i.e. maximum achievable rate), and (ii) classic waterfilling (WF) for all its uplinks. We set the value $Z = 1.01 \times 2^{\frac{-\tau}{\max(W_1, \cdots, W_K)}}$ in all trials.

We let ${\bf{\eta}}_P=[\eta_1, \cdots, \eta_{N-1}]$ denote an instant vector of backhaul link capacities with corresponding values shown in the figure. We first plot the achieved rate in Fig. \ref{fig:n}a for a sample network over time. In this case we set effective noise vector as ${\bf{E}}=[E_1 \cdots E_k] = [10^{-2}, 10^{-1}, 10^{-4}, 10^{-1}, 10^{-3}]$ Watts and ${\mathcal{W}} = [2, 5, 1, 2, 1]$ MHz. We observe that the scheme significantly outperforms classic waterfilling and iteratively approaches the optimum $R^*$ when $\tau$ is set to a lower value. In Fig. \ref{fig:n}b we plot the achieved rate per unit bandwidth (i.e. spectral efficiency) by varying the cell's cluster radius. We also randomize the capacity of every backhaul link using uniform distribution with a range of $\pm 5$ Mbps around a mean ${{\bf{\eta}}_P}=[{\overline{\eta}}_1, \cdots, {\overline{\eta}}_{N-1}]$. We again observe that the proposed scheme approaches the optimum value for both the heavily and the lightly constrained backhaul regimes. 

\section{Conclusion}
A transmit power allocation scheme for a UE with multiple access point connectivity with non-ideal backhaul network has been proposed. It has been demonstrated that even without explicit backhaul network knowledge, the UE can approach optimal rate to within a target bound of $\tau$ bps. Future work could consider the impact of multiple users and effects of interference.

\bibliography{Citations}

\begin{thebibliography}{1}

\bibitem{5G}
F.~Boccard, R.~W. Heath, A.~Lozano, T.~L. Marzetta, and P.~Popovski, ``Five
  disruptive technology directions for {5G},'' {\em IEEE Commun. Magazine},
  vol.~52, pp.~74 --80, Feb. 2014.

\bibitem{Chia_backhaul_2013}
S.~Chia, M.~Gasparroni, and P.~Brick, ``The next challenge for cellular
  networks: {B}ackhaul,'' {\em IEEE Microwave Mag.}, vol.~5, Aug. 2009.

\bibitem{PerezIEEESigProc2005}
D.~P. Palomar and J.~R. Fonollosa, ``Practical algorithms for a family of
  waterfilling solutions,'' {\em IEEE Trans Signal Proc.}, vol.~53,
  pp.~686--695, Feb 2005.

\bibitem{CuiCISS2014}
Z.~Cui and R.~Adve, ``Joint user association and resource allocation in small
  cell networks with backhaul constraints,'' in {\em Conf. Info. Sciences and
  Sys. (CISS)}, 2014.

\bibitem{WeiYuBackhaul2014}
Y.~Zhou and W.~Yu, ``Optimized backhaul compression for uplink cloud radio
  access network,'' {\em IEEE Journal Sel. Area Commun}, vol.~32,
  pp.~1295--1307, Jun. 2014.

\bibitem{Tree2010}
R.~Nadiv and T.~Naveh, ``Wireless backhaul topologies: Analyzing backhaul
  topology strategies,'' {\em Ceragon White Paper}, Aug. 2010.

\bibitem{sherali}
M.~Bazaraa, J.~Jarvis, and H.~Sherali, {\em Linear Programming and Network
  Flows}.
\newblock Wiley, 4~ed., 2010.

\bibitem{Neely2005}
M.~J. Neely, E.~Modiano, and C.~E. Rohrs, ``Dynamic power allocation and
  routing for time-varying wireless networks,'' {\em IEEE Journal Sel. Areas
  Commun.}, vol.~23, pp.~89--103, Jan. 2005.

\bibitem{3gppDCP12}
3GPP, ``Study on small cell enhancements for {E-UTRA} and {E-UTRAN}– higher
  layer aspects,'' {\em TR 36.842, Release 12}, 2013.

\end{thebibliography}

\end{document}